\newtheorem{theorem}{Theorem}
\newtheorem{lemma}[theorem]{Lemma}
\newtheorem{proposition}[theorem]{Proposition}
\newcommand\Mrook[2]{
 \node at (#1+0.5,#2+0.45) {\BlackRookOnWhite};
}
\newcommand\Mknight[2]{
 \node at (#1+0.5,#2+0.45) {\BlackKnightOnWhite};
}
\newcommand\Mpawn[2]{
 \node at (#1+0.5,#2+0.45) {\BlackPawnOnWhite};
}
\newcommand\Mbishop[2]{
 \node at (#1+0.5,#2+0.45) {\BlackBishopOnWhite};
}
\newcommand\Wbishop[2]{
 \node at (#1+0.5,#2+0.45) {\WhiteBishopOnWhite};
}
\newcommand{\true}{\mathtt{true}}
\newcommand{\false}{\mathtt{false}}
\begin{document}

\title{Solitaire Chess is NP-complete}

\author{Jens Ma{\ss}berg\\ \small
 Institut f\"ur Optimierung und Operations
Research,
Universit\"at Ulm,\\
\small  {jens.massberg@uni-ulm.de}}

\definecolor{lightgrayX}{rgb}{0.9, 0.9, 0.9}
\definecolor{lightgray}{rgb}{0.8, 0.8, 0.8}

\definecolor{colorA}{HTML}{ED7F22}
\definecolor{colorB}{HTML}{2D3E50}
\definecolor{colorC}{HTML}{297FB8}
\definecolor{colorD}{HTML}{C1392B}

\maketitle

\textbf{Keywords:} Solitaire Chess, NP-hardness, computational complexity 

\begin{abstract}
 ``Solitaire Chess'' is a logic puzzle published by Thinkfun\texttrademark\, 
that 
can be seen as a single person version of traditional chess.
Given a chess board with some chess pieces of the same color placed on it, the 
task is to 
capture all pieces but one using only moves that are allowed in chess.
Moreover, in each move one piece has to be captured.
We prove that deciding if a given instance of Solitaire Chess is solvable is
NP-complete.
\end{abstract}

\section{Introduction}

Solitaire Chess is a one-player puzzle game published by 
Thinkfun\texttrademark\,
\cite{thinkfun}. The puzzle is based on chess and can be seen as a single 
person version of it.
As in the original chess game, there are six different types of chess pieces:
king, queen, rook, bishop, knight and pawn.

In each instance some of the pieces (where several pieces of the same type are 
allowed) are placed on a $4\times 4$ chess 
board. Now the task is to capture all but one piece using only moves that are 
allowed in the original chess game (see e.g. \cite{chesswiki}). An additional 
rule is, that in each move 
one piece has to be captured.
Figure \ref{fig:example} shows an example instance and a sequence of moves 
solving it.

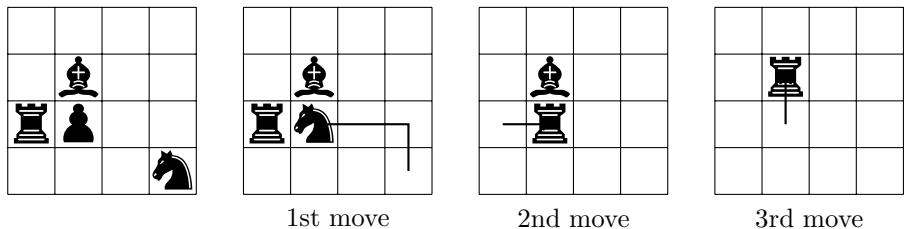
\begin{figure}[ht]
\begin{center}
\begin{tikzpicture}[scale=0.62]
 \draw[step=1.0,black] (0,0) grid (4,4);
 \Mrook{0}{1};
 \Mpawn{1}{1};
 \Mbishop{1}{2};
 \Mknight{3}{0};

 \begin{scope}[shift={(5,0)}]
  \draw[step=1.0,black] (0,0) grid (4,4);
  \Mrook{0}{1};
  \Mbishop{1}{2};
  \Mknight{1}{1};
  \draw[->,thick] (3.5,0.5) -- ++ (0,1) -- ++ (-2,0);
  
  \node at (2,-0.5) {1st move};
 \end{scope}

 \begin{scope}[shift={(10,0)}]
  \draw[step=1.0,black] (0,0) grid (4,4);
  \Mrook{1}{1};
  \Mbishop{1}{2};
  \draw[->,thick] (0.5,1.5) -- ++ (1,0);

  \node at (2,-0.5) {2nd move};
 \end{scope}

 \begin{scope}[shift={(15,0)}]
  \draw[step=1.0,black] (0,0) grid (4,4);
  \Mrook{1}{2};
  \draw[->,thick] (1.5,1.5) -- ++ (0,1);

  \node at (2,-0.5) {3rd move};
\end{scope}

\end{tikzpicture}
\caption{Example instance and a sequence of moves solving it.}
\label{fig:example}
\end{center}
\end{figure}

To our knowledge the puzzle has been first published in 2010 and
is available as hardware game, iPhone and Android apps and as online puzzle.

In this paper we consider a generalized version of the puzzle where an instance 
consists of $n$  pieces on a chess board of size $N\times N$, $N\in\mathbb{N}$. 
Moreover, every type of piece might appear several times in an instance.
 We prove that this generalized version of Solitaire Chess is NP-hard by 
giving a polynomial reduction from 3-SAT.
Note that the corresponding generalization of the traditional chess is 
EXPTIME-complete 
\cite{chess}.

\section{First Observations}

We identify each square of the board by a position $(i,j)\in \mathbb{Z}^2$.
If $i+j \equiv 0 \text{ mod } 2$, we call the square $(i,j)$ \emph{white}. 
Otherwise, it is called \emph{black}.
 The orientation of the chess board is only important for pawns, as they 
are only allowed to move ``forward''. As a moving piece must 
capture 
another piece, a pawn on position $(i,j)$ can only capture a piece at position 
$(i-1,j+1)$ or $(i+1,j+1)$.

A move is uniquely defined by a pair of pieces $(p_1,p_2)$ where $p_1$ captures 
$p_2$.  A move $(p_1,p_2)$ is feasible if $p_1$ can capture $p_2$ following the
rules of chess (ignoring the color of the pieces).

\begin{lemma}\label{lemma:inNP}
 Solitaire Chess is in NP. 
\end{lemma}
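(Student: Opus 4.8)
The plan is to show membership in NP by exhibiting a polynomial-size certificate that can be verified in polynomial time. The natural certificate is the solution itself: the sequence of moves that reduces the board to a single piece. I would first argue that this certificate has polynomial length. Since every move is required to capture exactly one piece, each move strictly decreases the number of pieces on the board by one. Starting from an instance with $n$ pieces, a solving sequence therefore consists of exactly $n-1$ moves, regardless of the board dimension $N$. Hence the certificate is a list of $n-1$ moves.

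Next I would address how to encode a single move so that the whole certificate stays polynomial in the input size. By the observation in the excerpt, a move is uniquely determined by the pair $(p_1, p_2)$ of pieces involved, where $p_1$ captures $p_2$. Each piece can be named by its current position in $\mathbb{Z}^2$, or more economically by an index into the list of pieces, so each move is recorded in $O(\log n + \log N)$ bits. The full certificate of $n-1$ moves thus has size polynomial in $n$ and $\log N$, which is polynomial in the input size.

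Finally I would describe the verification procedure. A polynomial-time verifier maintains the current board configuration, initialized to the input placement. For each move $(p_1,p_2)$ in the certificate it checks three conditions: that both $p_1$ and $p_2$ are present on the current board, that the move is feasible according to the rules of chess for the type of $p_1$ (using the feasibility notion already defined, including the restriction that a pawn on $(i,j)$ may only capture on $(i-1,j+1)$ or $(i+1,j+1)$ and that sliding pieces have an unobstructed path), and then it updates the board by removing $p_2$ and relocating $p_1$ to the square vacated by $p_2$. Each such check and update takes time polynomial in $n$ and $N$. After processing all $n-1$ moves, the verifier accepts if and only if exactly one piece remains. Since there are $n-1$ iterations, the total running time is polynomial.

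I do not expect any serious obstacle here; the main points to be careful about are that the number of moves is bounded (which follows immediately from the one-capture-per-move rule) and that checking feasibility of a single chess move, including path-blocking for rooks, bishops, and queens, is clearly polynomial-time. The argument is essentially a routine guess-and-check, so the emphasis is on stating the polynomial bounds cleanly rather than on any delicate construction.
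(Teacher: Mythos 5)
Your proposal is correct and follows essentially the same route as the paper: the certificate is the solving sequence itself, which has exactly $n-1$ moves because each move captures exactly one piece, and each move's feasibility can be checked in polynomial time. The paper states this more tersely, while you spell out the encoding and the verifier's bookkeeping, but the argument is the same.
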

\begin{proof}
 The feasibility of a move can be checked in linear time.
 An instance with $n$ chess pieces can be solved, if there exists a sequence of 
 moves such that only one piece survives. 
 As the number of pieces is reduced by one in each move such a sequence consists
 of exactly  $n-1$ moves.
 Thus the coding length of such a sequence is linear in
 the number of pieces and the feasibility of all moves can be verified in
polynomial time,
 implying that the problem is in NP.
\end{proof}

\section{Reduction from 3-SAT}

3-Satisfiability (of in short 3-SAT) is a well known problem in complexity 
theory.
A 3-SAT instance consist of $n$ variables $x_1,\ldots, x_n$ and $m$ 
clauses $C_1,\ldots, C_m$ where each clause contains exactly three literals of
$\{x_1,\ldots, x_n, \overline{x_1},\ldots,\overline{x_n}\}$.
For a given instance $I$ the problem is to determine if there exists a truth 
assignment $\tau:\{x_1,\ldots, x_n\}\rightarrow \{\true,\false\}$ such that at 
least one literal of each clause is satisfied.
It is well known that 3-SAT is NP-complete \cite{cook}.

Before defining the reduction from 3-SAT to Solitaire Chess in detail we give a 
short overview of our construction.
For every variable we add two variable columns
corresponding to the two possible truth assignments $\true$ and $\false$ for 
this variable.
 A \emph{variable rook} will move in one of these rows.
For every clause we add three clause rows, one for each literal of 
the clause. 
Moreover, we have two rooks for every clause that can move in two of 
the three rows. The third row corresponds to the literal of the clause which 
has to be satisfied by the truth assignment we are looking for.
In order to interlink the columns and rows we add for every 
literal of every clause two literal bishops. They are placed on the 
corresponding clause row and variable column, respectively, such that they can 
capture each other. If and only if all literal bishops can be captured, the
initial 3-SAT instance is satisfiable.
Finally, we require an additional bishop and several pawns in order to 
guarantee that all rooks can be captured.

In our construction we have to take care for the rooks not to leave their 
clause rows or variable columns. Moreover, the literal bishops should be only 
able to capture their partner literal bishop and no other piece.

For our reduction we set $M=8m^2$.

\subsection{Variables}

Let $x_i$, $i\in\{1,\ldots, n\}$, be a variable.
The columns $iM$ and $iM+2$ are the \emph{variable columns} of $x_i$ that 
correspond to the truth assignment of $x_i$, that is,
they correspond to $x_i=\true$ and $x_i=\false$, respectively.
Moreover, we have a \emph{variable rook} $r$ for $x_i$ which is originally 
placed in one of the two columns.

First the rook must pass a  \emph{column changing gate} consisting of three 
pawns $p_1,p_2,p_3$ as shown in Figure \ref{fig:variable}.
Depending in which direction the rook passes the gate, it will end in the left 
or the right column.
If the rook executes the moves $(r,p_1), (r,p_2), (r,p_3)$, it ends in the 
right column.
Otherwise the rook ends in the left column after executing the moves  $(r,p_3), 
(r,p_2)$ and $(r,p_1)$.
Note, that the three pawns cannot capture other pieces and have to be captured 
by $r$.
The remaining pieces are placed in such a way that a variable rook can leave
its column neither after passing the column changing gate nor after capturing a 
piece that does not belong to the gate.

Initially, the variable rook for $x_i$ is placed at position $(iM+2, 5m^2+ 8i)$ 
and the pawns of the columns changing gate at $(iM,5m^2+8i), (iM,5m^2+8i-2)$ and
$(iM+2,5m^2+8i-2)$.

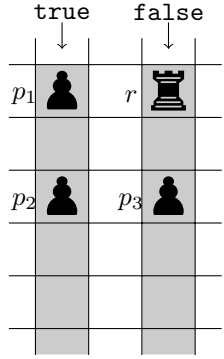
\begin{figure}[ht]
 \begin{center}
  \begin{tikzpicture}[scale=0.7]
    \draw[->] (0.5,3.8) -- ++(0,-0.5);
    \node at (0.5,4) {$\true$};
    \draw[->] (2.5,3.8) -- ++(0,-0.5);
    \node at (2.5,4) {$\false$};

    \draw[fill=lightgray, draw=white] (0,-2.5) rectangle (1,3);
    \draw[fill=lightgray, draw=white] (2,-2.5) rectangle (3,3);
    \draw[step=1.0,black] (-0.5,-2.5) grid (3.5,3.5);
    \Mrook{2}{2};
    \Mpawn{0}{2};
    \Mpawn{2}{0};
    \Mpawn{0}{0};
\node at (2-0.2,2.4) {$r$};
\node at (0-0.2,2.4) {$p_1$};
\node at (0-0.2,0.4) {$p_2$};
\node at (2-0.2,0.4) {$p_3$};
  \end{tikzpicture}
\caption{A variable rook, a column changing gate and two variable columns.}
\label{fig:variable}
\end{center}
\end{figure}

\subsection{Clause}

Let $C_j=(l_j^1 \vee l_j^2 \vee l_j^3)$, $j\in\{1,\ldots, m\}$, be a clause.
The three rows $6mj+2, 6mj+4$ and $6mj+6$ correspond to the literals $l_j^1$, 
$l_j^2$ and $l_j^3$, respectively.
Moreover we have two \emph{clause rooks} $r_1$ and $r_2$ for clause 
$C_j$ originally placed in the rows $6mj+6$ and $6mj+4$, respectively.
They will represent the two literals that might be unsatisfied by a 
feasible truth assignment.
To this end, they must be able to change the row where they are initially
placed.
We introduce a \emph{row changing gate} consisting of three pawns $p_1,p_2$ and 
$p_3$ that works as the column changing gate for variable rooks:
If the rook $r_1$ executes the moves 
$(r_1,p_1),(r_1,p_2)$ and $(r_1,p_3)$ it remains in its original row.
Otherwise it can change the row by executing the moves 
$(r_1,p_3),(r_1,p_2)$ and 
$(r_1,p_1)$. Note that pawns of a row changing gate have to 
be captured by clause rooks.
A second row changing gate is required for the second rook.
Figure \ref{fig:clause} shows the initial placement of the clause rooks and 
pawns of the row changing gate.

By using the row changing gates we can assure that the two rooks are in any two of
the three clause rows.
Note that it is also possible, that one of the rooks captures the other one.
In this case the instance can be solved only if there exists a truth assignment 
such that at least two of the literals of the clause are satisfied. 

Initially the positions of the clause rooks for clause $C_j$ are $(Mn+10j , 
6mj+4)$ and $(Mn+10j-4, 6mj+6)$.

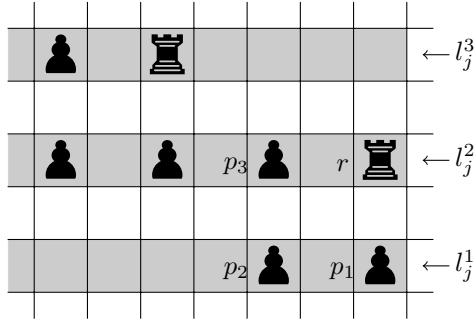
\begin{figure}[ht]
\begin{center}
  \begin{tikzpicture}[scale=0.7]
    \draw[->] (7.8,-1.5) -- ++(-0.5,0);
    \node at (8.1,-1.5) {$l^1_j$};
    \draw[->] (7.8,0.5) -- ++(-0.5,0);
    \node at (8.1,0.5) {$l^2_j$};
    \draw[->] (7.8,2.5) -- ++(-0.5,0);
    \node at (8.1,2.5) {$l^3_j$};
  
    \draw[fill=lightgray, draw=white] (-0.5,-2) rectangle (7,-1);
    \draw[fill=lightgray, draw=white] (-0.5,0) rectangle (7,1);
    \draw[fill=lightgray, draw=white] (-0.5,2) rectangle (7,3);
    \draw[step=1.0,black] (-0.5,-2.5) grid (7.5,3.5);
    \Mrook{6}{0};
    \Mrook{2}{2};
    \Mpawn{2}{0};
    \Mpawn{0}{0};
    \Mpawn{0}{2};

    \Mpawn{6}{-2};
    \Mpawn{4}{-2};
    \Mpawn{4}{0};

   \node at (6-0.2,0.4) {$r$};
   \node at (6-0.2,0.4-2) {$p_1$};
   \node at (4-0.2,0.4-2) {$p_2$};
   \node at (4-0.2,0.4) {$p_3$};
  \end{tikzpicture}
\end{center}
\caption{Clause rows of a clause $C_j$ and the initial placement of the two 
clause rooks and two row changing gates.}
\label{fig:clause}
\end{figure}

\subsection{Literal Bishops}

Up to now we have defined the pieces, rows and columns required to represent 
clauses and variables. Now we interlink the clauses and the corresponding 
literals.
Let $C_j$, $j\in\{1,\ldots, m\}$, be a clause and $l_j^k$, $k\in\{1,2,3\}$, be 
one of its literals. Assume $l_j^k$ is a literal of the variable $x_i$,
$i\in\{1,\ldots,n \}$.
We add a pair of \emph{literal bishops} $b_1$ and $b_2$ which interlink the 
$k$'th clause row of clause $C_j$ and the left or right variable column of 
$x_i$ (depending on $l_j^k$ being positive or negative).
We place $b_1$ and $b_2$ such that $b_1$ can only be captured by $b_2$ or one 
of the clause rooks of $C_j$ and $b_2$ can be captured by $b_1$ or the literal
rook of $x_i$.

We place $b_1$ at position
\begin{equation}
(iM-7-2( i+j \text{ mod } m)+2k, 6mj+2k)  
\end{equation}
and $b_2$ at position
\begin{eqnarray}
  (iM,6mj +7 +2( i+j \text{ mod } m)) &&\text{if literal }l_j\text{ is 
positive and} \\
  (iM+2,6mj +9 +2( i+j \text{ mod } m))) &&\text{if literal }l_j\text{ is 
negative}.
\end{eqnarray}

\begin{lemma}\label{lemma:lit}
 The only feasible move for a literal bishop is to capture a piece at the 
position of its opponent.
\end{lemma}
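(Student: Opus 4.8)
The plan is to proceed in two stages: first confirm that $b_1$ and $b_2$ can capture each other, and then rule out every other capture. For the first stage I would simply compute the coordinate differences of $b_1$ and $b_2$. Writing $s:=(i+j\bmod m)$, in the positive case both the horizontal and the vertical gap equal $7+2s-2k$, and in the negative case both equal $9+2s-2k$; hence $b_1$ and $b_2$ always lie on a common (up-right) diagonal, and checking that the few squares strictly between them are empty shows each bishop can indeed capture the other. A useful invariant to record here is parity: since $M=8m^2$ and $6mj$ are even, the coordinate sum of every literal bishop is odd, so all literal bishops sit on black squares and only black-square pieces are ever candidates for capture. Recall also that a bishop can only ever take the first piece it meets along one of its four diagonals.

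Next I would sort the remaining pieces by the line to which each is confined: variable rooks and column-gate pawns live in the columns $iM$ and $iM+2$, clause rooks and row-gate pawns live in the clause rows $6mj+2\kappa$ ($\kappa\in\{1,2,3\}$), and the rest are other literal bishops. The strategy is, for each such line, to intersect it with the four diagonals through $b_1$ and through $b_2$ and locate the intersection squares. The decisive computation is that the diagonal of $b_2$ meets its own variable column only at $b_2$ itself, and meets the clause row $6mj+2k$ at exactly $b_1$'s square on the down-left side and at $iM+(7+2s-2k)$ on the down-right side; symmetrically the diagonals of $b_1$ meet the clause row only at $b_1$ and the variable column only at the square of $b_2$. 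Thus the only way a rook or gate pawn can lie on a literal bishop's diagonal is by occupying the opponent's square, which is precisely the capture allowed by the statement (a piece that has slid onto the partner's old square is still ``a piece at the position of its opponent''); for the off-square intersections such as $iM+(7+2s-2k)$ one argues that rooks come to rest only where they capture and that no piece is ever placed there.

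The main obstacle is ruling out captures between \emph{different} literal-bishop pairs, and this is exactly where the offset $2s=2(i+j\bmod m)$ earns its place. For two pairs belonging to different variables $x_i\ne x_{i'}$, the relevant columns are separated by at least $M=8m^2$, which is chosen large enough to exceed the total vertical extent of the whole construction; a diagonal spanning that much horizontally must leave the board first, so these pairs cannot see each other. For two pairs of the same variable but different clauses $j\ne j'$, the vertical positions differ by a nonzero multiple of $6m$ plus an $O(m)$ correction coming from the offsets, and since $6m$ dominates every such correction the two diagonal equations can never balance. The residual same-variable, same-clause collisions are annihilated by the stagger term together with the distinct values of $k$. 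Carrying out this bookkeeping — verifying that $6m$ and $M=8m^2$ genuinely dominate all the $O(m)$ offset terms in each inequality — is the bulk of the proof; once it is in place, the only surviving feasible capture for each literal bishop is the one onto its opponent's square.
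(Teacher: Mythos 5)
Your same-pair computation, the parity observation, and the dominance estimates (horizontal separation of order $M$ between bishops of different variables, vertical separation of order $6m$ between different clauses of the same variable) are all sound and correspond to the paper's range/congruence arguments for the invariants $x-y$ and $x+y$. The genuine gap lies in how you handle the pieces that are \emph{not} literal bishops. The paper disposes of them in one stroke: since every move is a capture, the set of occupied squares can only shrink, so at every moment every occupied square is an \emph{initial} square; the only initial squares that are black are the literal-bishop squares, and a literal bishop never leaves black squares, so the only squares on which it can ever capture anything are original literal-bishop positions. This reduces the whole lemma to the diagonal analysis of the initial bishop placement. Your alternative --- intersecting each rook/pawn confinement line with the bishops' diagonals and asserting that the off-square intersections are ``never occupied'' --- secretly needs exactly this shrinking-occupancy principle (your phrase ``rooks come to rest only where they capture'' is that principle restricted to rooks), but you never state or justify it, and your case analysis omits the cleaning pawns and the cleaning bishop entirely. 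Moreover, one of your intersection claims is false as stated: variable $x_i$ has \emph{two} columns, $iM$ and $iM+2$, and the diagonals through $b_2$ cross the twin column two squares away from $b_2$ --- in fact exactly at the square where the $b_2$ of the opposite-sign literal of $x_i$ in the same clause would be placed.

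Second, your assertion that the ``residual same-variable, same-clause collisions are annihilated by the stagger term together with the distinct values of $k$'' is wrong. Every literal bishop attached to the pair (variable $x_i$, clause $C_j$) satisfies $x-y=iM-6mj-7-2\bigl((i+j)\bmod m\bigr)$ independently of $k$ and of the sign of the literal: for $b_1$ the $+2k$ in the column cancels the $+2k$ in the row, and for the negative placement of $b_2$ both coordinates are shifted by $2$. So two such bishops always share a diagonal, and no stagger separates them. This degenerate case is vacuous only under the implicit assumption --- made silently by the paper as well, whose proof treats only $(i,j)\neq(i',j')$ --- that the three literals of a clause involve three distinct variables; you should invoke that assumption (or restrict the 3-SAT instances accordingly) rather than claim a separation that does not exist.
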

\begin{proof}
 First note that the literal bishops are on black squares and all other pieces
are on white squares. 
Thus a literal bishop can only capture pieces that are currently
placed on the original positions of other literal bishops.

Let $b$ be a literal bishop for a literal of clause $C_j$ and of variable 
$x_i$ and $b'$ be a literal bishop for a literal of clause $C_{j'}$ and 
variable $x_{i'}$ with $(i,j)\neq (i',j')$.
Let $(x,y)$ and $(x',y')$ be the positions of $b$ and $b'$, respectively.
We want to prove, that $b$ cannot capture $b'$. It suffices to show that 
$x-y\neq x'-y'$ and $x+y\neq x'+y'$.

To prove the first inequality observe that
$$ (i-1)M < x-y < iM.$$
Thus if $i\neq i'$ then $x-y\neq x'-y'$.
If $i=i'$ then 
$$x-y = iM -6mj-7-2 (i+j\text { mod }m))\equiv -7-2(i+j\text { mod }m)\text{
mod } 6m$$
and therefore $x-y\neq x'-y'$ if $j\neq j'$.

To prove the second inequality observe that
$$iM+6mj-3m < x+y < iM+6mj+3m.$$
Thus $\lfloor (x+y+3m)/ M\rfloor = i$ and $ \lfloor((x+y \text{ mod 
}M)+3m)/6m\rfloor = j$ implying $x+y\neq x'+y'$ if $(i,j)\neq(i',j')$.

Now let $b_1$ and $b_2$ be the two literal bishops of the same literal
at position $(x_1,y_1)$ and $(x_2,y_2)$, respectively. By definition
$x_1-y_1=x_2-y_2$, thus $b_1$ can capture $b_2$ and vice versa.
\end{proof}

\begin{figure}[ht]
\begin{center}
  \begin{tikzpicture}[scale=0.7]
    \draw[fill=lightgray, draw=white] (4,-2.5) rectangle (5,6.5);
    \draw[fill=lightgray, draw=white] (6,-2.5) rectangle (7,6.5);
    \draw[fill=lightgray, draw=white] (-1.5,-2) rectangle (7.5,-1);
    \draw[fill=lightgray, draw=white] (-1.5,0) rectangle (7.5,1);
    \draw[fill=lightgray, draw=white] (-1.5,2) rectangle (7.5,3);
    \draw[step=1.0,black] (-1.5,-2.5) grid (7.5,6.5);

    \Wbishop{6}{5};
    \Wbishop{4}{3};

    \Mbishop{3}{2};
    \Mbishop{1}{0};
    \Mbishop{-1}{-2};

  \end{tikzpicture}
 \end{center}
\caption{Possible positions for the two literal bishops of a literal. The black
bishops indicate feasible positions for $b_1$ and the white bishops positions 
for
$b_2$.}
\end{figure}
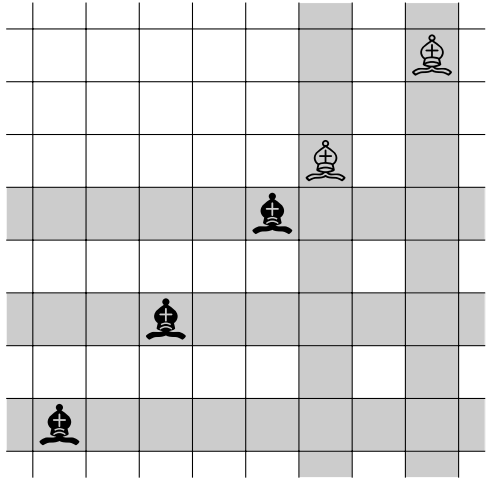


\subsection{Cleaning Pieces}
Finally, we require some more pieces whose purpose is to capture all 
remaining pieces in a feasible instance.

To this end we add a \emph{cleaning bishop} at position $(0,0)$.
Moreover, we add for every clause row $y$ a pawn at position $(-y,y)$ 
and for every variable row $x$ a pawn at position $(x,-x)$. Thus we add in 
total $3m+2n$ \emph{cleaning pawns}.
We have already seen, that the clause and variable rooks can capture all pawns 
that belong to row and column changing gates. Moreover, each rook can capture 
one of the cleaning pawns. After this, the cleaning bishop can capture all 
pawns 
and rooks at positions $(i,-i)$, $i\in\mathbb{Z}$. Thus if there are no literal 
bishops, we can capture all but one piece and the instance is solvable.

So the challenge of such an instance is to capture all literal bishops.

\subsection{NP-completeness}

Before finally proving that the transformed Solitaire Chess instance is 
solvable if and only if the original 3-SAT instance is satisfiable, we need 
some more observations, which pieces can be captured by other ones.
First note, that literal bishops cannot capture cleaning pawns or the cleaning
bishop:
Initially all literal bishops are on black squares, while all pawns and 
the cleaning bishop are on white squares.

\begin{proposition}
 A clause rook can change its row only in row changing gates
and variable rooks can change their column only in their column changing gate.
\end{proposition}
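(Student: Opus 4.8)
The plan is to reduce everything to a local incidence analysis via the basic rook rule: a rook changes its row exactly when it makes a vertical capture (a piece on the same column but a different row) and changes its column exactly when it makes a horizontal capture. So for each rook it suffices to control, over every square it can reach, which pieces lie on the perpendicular line through it and are reachable with a clear (unobstructed) path.

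First I would treat the variable rook of $x_i$. The goal is to show it never leaves the two columns $iM$ and $iM+2$, so that a column change can only be a swap between these two, and then that the only such swap is a gate move. For confinement I would determine the rows the rook can occupy, namely its two gate rows and the rows of the bishops $b_2$ of the literals of $x_i$, and check that these rows meet no piece outside columns $iM,iM+2$: the $b_2$-rows $6mj+7+2((i+j)\bmod m)$ are odd and hence disjoint from the even clause rows $6mj+2k$ where the bishops $b_1$ sit, while the gate rows $5m^2+8i$ are to be checked disjoint from every clause row and from every other variable's gate and $b_2$ rows. Given this, in any row the rook occupies the only other pieces present are in columns $iM$ and $iM+2$ (its own gate pawns, its own $b_2$-bishops, the cleaning pawns), so a horizontal capture cannot escape these two columns. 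Finally I would read off from the coordinate formulas that the unique pair of pieces occupying one common row in the two different columns is the gate pair at $(iM,5m^2+8i-2)$ and $(iM+2,5m^2+8i-2)$; in particular the positive $b_2$-rows in column $iM$ and the negative $b_2$-rows $6mj+9+2((i+j)\bmod m)$ in column $iM+2$ are disjoint, so they produce no spurious cross-column pair. Hence the variable rook can swap its column only inside its column-changing gate.

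The clause rook of $C_j$ is the row-and-column mirror of this, but genuinely harder, because it does leave its home columns: it slides horizontally (preserving its row) to capture the bishops $b_1$ of the literals of $C_j$, landing on squares $(iM-7-2((i+j)\bmod m)+2k,\,6mj+2k)$. A row change is a vertical capture, so I must show that in every column the rook can occupy, the only different-row pieces reachable with a clear path are the pawns of one of its two row-changing gates. In the home columns $Mn+10j,\,Mn+10j-2,\,Mn+10j-4,\,Mn+10j-6$ this is immediate, since the spacing $10$ between consecutive clauses leaves only this clause's gate pawns in those columns; the delicate case is a column running through a $b_1$-square, where I would adapt the incidence computation of Lemma \ref{lemma:lit}, showing via the modular offsets $2((i+j)\bmod m)$ that no two literal bishops reachable by a single clause rook lie in a common column with a clear vertical path between them.

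The main obstacle is precisely this last point: excluding an unintended vertical capture between two literal bishops sharing a column is the row-analogue of the $x\pm y$ incidence argument of Lemma \ref{lemma:lit}, and it is where the modular term in the $b_1$-coordinate must do its work. Once that bookkeeping is in place it combines with the spacing arguments above to force every line-changing (perpendicular) capture available to a rook to be a gate move, which is exactly the assertion of the proposition.
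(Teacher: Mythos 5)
Your overall skeleton is the same as the paper's: a row change is a vertical capture, so the proposition reduces to showing that, outside the gates, no square a rook can occupy shares its column with another piece (and symmetrically for variable rooks). The paper compresses this entire verification into one sentence --- ``by construction, for any piece in row $y$ not belonging to a row changing gate there is no other piece with the same x-coordinate'' --- whereas you try to actually verify it, and you correctly single out the dangerous case: two literal bishops $b_1$ sharing a column. The gap is that you defer exactly this case, asserting that the modular offsets $2((i+j)\bmod m)$ will rule it out, and they do not: the claim you plan to prove is false under the paper's coordinates. Suppose the same variable $x_i$ occurs as literal $k$ of clause $C_j$ and as literal $k+1$ of clause $C_{j+1}$ (with $k\le 2$, $j\le m-1$), and suppose $i+j\not\equiv -1 \pmod{m}$, so the modulus does not wrap. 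Then the two bishops $b_1$ have x-coordinates $iM-7-2((i+j)\bmod m)+2k$ and $iM-7-2((i+j+1)\bmod m)+2(k+1)$, which are \emph{equal}: the shift $(j,k)\mapsto(j+1,k+1)$ leaves the quantity $k-((i+j)\bmod m)$, and hence the column, invariant. That common column is odd, and every piece other than a $b_1$ bishop stands on an even column, so the vertical path between the two bishops (rows $6mj+2k$ and $6m(j+1)+2(k+1)$) is clear. Since a clause rook of $C_j$ does land on the square of its bishop $b_1$ --- this is forced in the paper's own intended solution, where the rook captures $b_2$ on that square --- it can then capture the second bishop vertically, changing its row far from any gate. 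So the step you flagged as ``the main obstacle'' is not bookkeeping you left out; it is a step that cannot be carried out.

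Two further remarks. First, this is not only a defect of your write-up: the paper's one-line justification makes precisely the assertion your deferred step would need, and the same pair of bishops contradicts it; the offsets $2((i+j)\bmod m)$ were chosen to separate \emph{diagonals} (that is what Lemma~\ref{lemma:lit} uses them for) and have no power to separate \emph{columns}. Second, a smaller instance of the same disease affects your variable-rook argument: you ask to ``check'' that the gate rows $5m^2+8i$, $5m^2+8i-2$ avoid all clause rows, but equalities such as $5m^2+8i-2=6mj+2k$ do occur for admissible parameters (e.g.\ $m=4$, $i=3$, $j=4$, $k=3$), so that check fails as well. Your instinct about where the difficulty of the proposition lies is exactly right, but closing it requires modifying the construction's coordinates (so that distinct occurrences of a variable genuinely get distinct columns and the various row families are kept disjoint), not a more careful adaptation of Lemma~\ref{lemma:lit}.
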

\begin{proof}
 Let $r$ be a clause rook positioned in row $y$. In order to change its row
there must be two
 pieces at positions $(x,y)$ and $(x,y')$ for $x,y\in\mathbb{Z}$, $y\neq y'$.
 But by construction for any piece in row $y$ at position $(x,y)$ that does not 
 belong to a row changing gate there is no other piece with the same x-coordinate.
 The same arguments apply to variable rooks by exchanging rows and columns.
\end{proof}

\begin{proposition}
 If the Solitaire Chess instance is solvable, then in an optimal sequence of
moves at least one of the two literal bishops is captured by a variable or a
clause rook.
\end{proposition}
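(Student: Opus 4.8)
The plan is to fix one pair of literal bishops $b_1,b_2$, belonging to a literal $l_j^k$ of a clause $C_j$ and a variable $x_i$, and to analyse the few ways in which these two pieces can leave the board during a solving sequence. Recall from Lemma \ref{lemma:lit} that a literal bishop can only capture a piece standing on the original square of its partner, so $b_1$ and $b_2$ can capture only one another; and by the placement chosen in the previous subsection, $b_1$ can be captured only by $b_2$ or by a clause rook of $C_j$, while $b_2$ can be captured only by $b_1$ or by the variable rook of $x_i$. These are the only interactions available to the pair, since by Lemma \ref{lemma:lit} distinct literal bishops can never reach one another, and since all literal bishops sit on black squares while all pawns and the cleaning bishop sit on white squares, a literal bishop neither captures nor is captured by any cleaning piece.

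The first and main step is to argue that in a sequence solving the instance no literal bishop can be the single surviving piece; equivalently, that both $b_1$ and $b_2$ are actually captured. For this I would show that the cleaning bishop can never be captured, so that it is forced to be the survivor. No bishop can capture it, because it stands on a white square, every literal bishop stands on a black square, and a bishop preserves the colour of its square. No rook can capture it either: by the preceding proposition a clause rook stays in the clause rows and a variable rook stays in its variable column, and none of these rows or columns is the row or the column through $(0,0)$. Finally, no pawn can reach an attacking square $(\pm1,-1)$: all cleaning pawns lie, together with the cleaning bishop, on the single anti-diagonal $\{(a,b):a+b=0\}$, and since pawns only move to strictly larger second coordinate while the variable cleaning pawns sit at second coordinate $\le -M$ with a large empty gap down to the origin, no pawn can walk along the anti-diagonal to an attacking square. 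Hence the cleaning bishop is never captured, it is the unique surviving piece, and therefore both $b_1$ and $b_2$ are captured.

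With both bishops captured, the statement follows from a one-line contradiction. Suppose neither $b_1$ nor $b_2$ is captured by a rook. Then, by the capture options listed above, $b_1$ must be captured by $b_2$ and $b_2$ must be captured by $b_1$. But whichever of these two captures is executed first removes one of the two bishops from the board, making the other capture impossible. This contradiction shows that at least one of $b_1,b_2$ is captured by a clause rook or by the variable rook, which is exactly the claim.

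The final case distinction is routine; the real work lies in the middle step. The hard part will be to justify rigorously that the cleaning bishop is uncapturable---equivalently, that a literal bishop can never be the last piece standing---since this is precisely where the exact coordinates of the construction must be used: the confinement of the rooks to their fixed rows and columns, and the placement of every cleaning piece on the one anti-diagonal through the origin. This is also what the restriction to a genuine solving sequence is buying us, as it guarantees that every piece other than the cleaning bishop, and in particular both members of the pair $b_1,b_2$, is eventually captured.
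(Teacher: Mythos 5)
Your closing case distinction is fine, and it is in fact the entire substance of the paper's own proof: by Lemma \ref{lemma:lit} the two bishops of a pair can only capture one another, only one of these two mutual captures can ever occur, and the bishop that performs it then sits on a literal-bishop square where the only remaining potential capturer is the clause rook or the variable rook. But the ``main step'' you build everything on --- that the cleaning bishop can never be captured and is therefore forced to be the unique survivor --- is false, and the error occurs exactly where you analyse the cleaning bishop only on its initial square $(0,0)$. The cleaning bishop does not stay at $(0,0)$: its whole purpose is to sweep along the anti-diagonal $\{(t,-t):t\in\mathbb{Z}\}$ capturing the cleaning pawns and the rooks that have moved onto that diagonal. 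The moment it captures the cleaning pawn at $(iM,-iM)$ (respectively at $(-y,y)$ for a clause row $y$), it is standing \emph{in} the variable column $iM$ (respectively in the clause row $y$), which is precisely where the corresponding rook is confined, and that rook can then capture it along a clear file or rank. This is not a hypothetical loophole: take the paper's intended solving sequence and modify only its end, letting the cleaning bishop capture the last cleaning pawn itself and then letting the rook of that column capture the cleaning bishop. This is still a valid solving sequence, but its survivor is a rook and the cleaning bishop is captured. So the claim you call the ``real work'' is not merely unproven --- it is untrue, and no argument can establish it; consequently your deduction that both $b_1$ and $b_2$ must be captured collapses.

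What your approach was trying to buy --- excluding the case that a literal bishop of the pair is itself the overall sole survivor --- is handled implicitly (one may say: glossed over) by the paper, which never mentions the cleaning bishop in this proof at all. If you want to close that case rigorously, the useful observation is about immobility rather than about the cleaning bishop: once $b_1$ has captured $b_2$ at $(x_2,y_2)$, the square $(x_1,y_1)$ can never be occupied again, because any piece moving onto it would have to capture a piece standing there; hence $b_1$ can never move again, so it cannot make the final capture of the sequence, and the survivor of a solving sequence is always the piece making the final capture. One is left only with ruling out that the capture of $b_2$ by $b_1$ is itself the very last move of the whole sequence, which fails because the pieces on the anti-diagonal cannot all eliminate one another without a rook or the cleaning bishop surviving among them, contradicting the assumption that only $b_1$ and $b_2$ remained. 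That line of reasoning stays entirely within the capture relations established by Lemma \ref{lemma:lit} and the confinement proposition, and does not require any piece to be uncapturable.
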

\begin{proof}
By Lemma \ref{lemma:lit} a literal bishop can capture only a piece at the 
position of the opponent bishop. But then this bishop has to be captured by a 
rook in the corresponding row or column.
\end{proof}

\begin{theorem}
 Solitaire Chess is NP-complete.
\end{theorem}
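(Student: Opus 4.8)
The plan is to combine Lemma~\ref{lemma:inNP}, which already places the problem in NP, with a proof that the reduction from 3-SAT described above is correct and runs in polynomial time. For the polynomial bound I would observe that the constructed board has side length $N=O(Mn)=O(m^2n)$ and carries $O(nm)$ pieces, each specified by coordinates of bit-length $O(\log(mn))$, so the instance is produced in polynomial time. It then remains to show that the Solitaire Chess instance is solvable if and only if the given 3-SAT instance $I$ is satisfiable.

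For the direction ``satisfiable $\Rightarrow$ solvable'' I would start from a satisfying assignment $\tau$ and exhibit an explicit solving sequence. First, for each variable $x_i$ I push its variable rook through the column changing gate so that it ends in the column $iM$ when $\tau(x_i)=\true$ and in the column $iM+2$ when $\tau(x_i)=\false$. Next, for each clause $C_j$ I pick one satisfied literal $l_j^k$ and route the two clause rooks through their row changing gates so that together they occupy the two clause rows of the other two literals. Every literal bishop pair is then resolved in one of two symmetric ways: for a pair lying in a clause row occupied by a clause rook I let $b_2$ capture $b_1$ (landing on the clause row) and then let the clause rook capture the survivor; for the pair of the chosen satisfied literal I let $b_1$ capture $b_2$ (landing on the variable column) and then let the corresponding variable rook, which by the choice of $\tau$ sits in exactly that column, capture the survivor. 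Here I use that a single variable rook can sweep its whole column and thus capture every surviving bishop placed on it. Finally the cleaning pieces finish the job: each rook captures one cleaning pawn on the diagonal $\{(i,-i):i\in\mathbb{Z}\}$, after which the cleaning bishop captures every remaining rook and pawn on that diagonal, leaving one piece.

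For the converse ``solvable $\Rightarrow$ satisfiable'' I would read off a truth assignment from a solving sequence and verify that it satisfies $I$. By the first proposition each variable rook stays in a single column after its gate, so defining $\tau(x_i)$ according to that final column is well defined; this is the candidate assignment. By Lemma~\ref{lemma:lit} and the second proposition, for each literal the two literal bishops can only capture each other, and the survivor must be captured by a rook standing in the same clause row or variable column. Since each clause has three literal rows but only two clause rooks (fewer if one captures the other), at most two of its three literal bishop pairs can be resolved through a clause rook; hence in every clause at least one pair must be resolved through its variable column. For that the responsible variable rook must stand in the column whose polarity matches the literal, which by definition of $\tau$ means that this literal is satisfied. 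Thus every clause contains a satisfied literal and $\tau$ satisfies $I$.

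The main obstacle is the converse direction, and within it the verification that the construction offers no unintended shortcut for disposing of a literal bishop. The facts that rule such shortcuts out have been isolated already: Lemma~\ref{lemma:lit} (a literal bishop may only capture its partner), the colour argument (literal bishops sit on black squares while all pawns and the cleaning bishop sit on white squares, so they never interact), and the two propositions (rooks cannot leave their row or column except in a gate, and a surviving literal bishop must be taken by a rook). Granting these, the counting argument ``three rows, two clause rooks'' forces a satisfied literal in each clause, and the resulting equivalence, together with Lemma~\ref{lemma:inNP}, yields NP-completeness.
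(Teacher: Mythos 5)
Your proposal is correct and follows essentially the same route as the paper: NP-membership via Lemma~\ref{lemma:inNP}, a polynomial size bound on the construction, the same explicit solving sequence from a satisfying assignment (rooks through gates, $b_1$ captures $b_2$ on the satisfied literal's variable column and $b_2$ captures $b_1$ elsewhere, then cleaning pieces), and the same converse argument defining $\tau$ from the rooks' final columns and counting that two clause rooks can dispose of at most two of the three bishop pairs. Your write-up is in fact slightly more explicit than the paper's on a few points, e.g.\ that a variable rook can sweep up several surviving bishops in its column and that the color argument excludes interactions with the cleaning pieces.
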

\begin{proof}
 The transformation is polynomial:
 For a 3-SAT instance with $n$ variables and $m$ clauses we require
 $17m+1+4n$ chess pieces and all pieces are placed on a board of polynomial
 size.

 Assume $I$ is a 3-SAT instance that is satisfied by a truth assignment $\tau$. 
We give a sequence of moves that solves the corresponding Solitaire 
Chess instance.
For every clause $C_j$, $j\in\{1,\ldots, m\}$, there exists at least one 
literal that is satisfied by $\tau$. We set $z_j\in \{1,2,3\}$ such that 
$l_j^{z_j}$ is satisfied by $\tau$.

For every variable $x_i$, $i\in\{1,\ldots, n\}$, the variable rook uses the 
column selecting gate in order to get into the row corresponding to $\tau(x_i)$.
The two rooks of every clause $C_j$, $j\in\{1,\ldots, m\}$, use the row 
selecting gates to get into the two rows corresponding to the two literals 
in $\{l_j^1,l_j^2,l_j^3\}\setminus\{l_j^{z_j}\}$.
Consider a literal $l_j^z$, $z\in\{1,2,3\}$.
If $z=z_j$, then the literal bishop $b_1$ captures 
$b_2$ and is captured by the variable rook. Otherwise, $b_2$ 
captures $b_1$ and is itself captured by the clause rook in the corresponding 
row.
By this all literal rooks are captures. Finally, the rooks capture their 
cleaning pawns and the cleaning bishops capture all remaining pieces. We
conclude that the Solitaire Chess instance is solvable.

Now assume that the Solitaire Chess instance is solvable.
We define a truth assignment $\tau$ by setting $\tau(x_i)=\true$ if the variable
rook corresponding to $x_i$ uses the right column and $\tau(x_i)=\false$
otherwise. 
Now consider a clause $C_j$, $j\in\{1,\ldots, m\}$. We have to show that at 
least one of the literals $l_j^1,l_j^2$ or $l_j^3$ is satisfied by $\tau$.
At most two of the corresponding literal bishop pairs have been captured by 
their 
clause rooks. Thus the third bishop pair must be captured by the corresponding 
variable rook. But then this literal is satisfied by $\tau$.

 We finish the proof by observing that by Lemma \ref{lemma:inNP} the problem is 
in
NP.
\end{proof}

\section*{Acknowledgment}

The author wants to thank the Brendel family for pointing him to Solitaire Chess
and Thinkfun\texttrademark\, for publishing the game.

\nocite{*}
\bibliography{solitairechess}{}
\bibliographystyle{plain}

\end{document}